\newtheoremstyle{custom}
{} 
{} 
{} 
{} 
{\bfseries} 
{:} 
{.25em} 
{} 
\theoremstyle{custom}
\newtheorem{theorem}{Theorem}
\newtheorem{lemma}[theorem]{Lemma}
\newtheorem{remark}[theorem]{Remark}
\newtheorem*{theorem*}{Theorem}
\newtheorem*{lemma*}{Lemma}
\newtheorem*{proposition*}{Proposition}
\newtheorem*{definition*}{Definition}
\newtheorem*{example*}{Example}
\newtheorem*{remark*}{Remark}
\newtheorem*{corollary*}{Corollary}
\let\l@ENGLISH\l@english
\title{Comparing the bit-MAP and block-MAP decoding thresholds of Reed-Muller codes on BMS channels}
\author{\IEEEauthorblockN{Shrinivas Kudekar\IEEEauthorrefmark{1}, Santhosh Kumar\IEEEauthorrefmark{2}, Marco Mondelli\IEEEauthorrefmark{3}, Henry D. Pfister\IEEEauthorrefmark{4}, R\"{u}diger Urbanke\IEEEauthorrefmark{3}, }
\IEEEauthorblockA{\IEEEauthorrefmark{1}Qualcomm Research, New Jersey, USA\\
Email: \texttt{skudekar@qti.qualcomm.com}}
\IEEEauthorblockA{\IEEEauthorrefmark{2}Department of Electrical and Computer Engineering, Texas A\&M University, College Station\\
Email: \texttt{santhosh.kumar@tamu.edu}}
\IEEEauthorblockA{\IEEEauthorrefmark{3}School of Computer and Communication Sciences, EPFL, Switzerland\\
Emails: \texttt{\{marco.mondelli, ruediger.urbanke\}@epfl.ch}}
\IEEEauthorblockA{\IEEEauthorrefmark{4}Department of Electrical and Computer Engineering, Duke University\\
Email: \texttt{henry.pfister@duke.edu}}
}
\begin{document}

\maketitle
\begin{abstract}
\noindent The question whether RM codes are capacity-achieving is a long-standing open problem in coding theory that was recently answered in the affirmative for transmission over erasure channels \cite{RMpaper-ITTran,RMpaper-STOC}. Remarkably, the proof does not rely on specific properties of RM codes, apart from their symmetry. Indeed, the main technical result consists in showing that any sequence of linear codes, with doubly-transitive permutation groups, achieves capacity on the memoryless erasure channel under bit-MAP decoding. Thus, a natural question is what happens under block-MAP decoding. In \cite{RMpaper-ITTran,RMpaper-STOC}, by exploiting further symmetries of the code, the bit-MAP threshold was shown to be sharp enough so that the block erasure probability also converges to 0. However, this technique relies heavily on the fact that the transmission is over an erasure channel.

We present an alternative approach to strengthen results regarding the bit-MAP threshold to block-MAP thresholds. This approach is based on a careful analysis of the weight distribution of RM codes. In particular, the flavor of the main result is the following: assume that the bit-MAP error probability decays as $N^{-\delta}$, for some $\delta>0$. Then, the block-MAP error probability also converges to 0. This technique applies to transmission over any binary memoryless symmetric channel. Thus, it can be thought of as a first step in extending the proof that RM codes are capacity-achieving to the general case.
\end{abstract}

\begin{IEEEkeywords}
RM codes, weight distribution, bit-MAP threshold, block-MAP threshold.
\end{IEEEkeywords}

\section{Introduction} \label{sec:intro}

Reed-Muller (RM) codes are among the oldest known codes: they were introduced by Muller in~\cite{Muller-ire54} and, soon after, Reed proposed a majority logic decoder in~\cite{Reed-ire54}. A binary Reed-Muller code RM$(n, v)$, parameterized by non-negative integers $n$ and $v$, is a linear code of length $N=2^n$ and rate $R=\frac1N \sum_{i=0}^v \binom{n}{i}$. It is well known that the minimum distance of this code is $2^{n-v}$~\cite{Macwilliams-1977}.

The idea that RM codes might achieve capacity appears to be rather old and has been discussed by a variety of authors (for a detailed list of references, see Section I-B of \cite{RMpaper-ITTran}). In particular, it was observed numerically that the block error probability under MAP decoding for transmission over the binary erasure channel (BEC) of short RM codes is very close to that of random codes~\cite{Carlet-isit05,Didier-it06,Mondelli-com14}. Beyond erasure channels, it was conjectured in~\cite{Costello-proc07} that the sequence of rate-$1/2$ self-dual RM codes achieves capacity on the binary-input AWGN channel. For rates approaching either $0$ or $1$ with sufficient speed, it was shown that RM codes can correct almost all erasure patterns up to the capacity limit\footnote{Some effort is required to define capacity for rates approaching $0$ or $1$. See \cite[Definition 16]{Abbe-it15} for details.} \cite{Abbe-it15,Abbe-stoc15}. For rates approaching $0$ fast enough, it was also proved that RM codes can correct random error patterns up to the capacity limit~\cite{Abbe-it15,Abbe-stoc15}. 

The conjecture that RM codes achieve capacity on erasure channels under MAP decoding, for any rate $R\in (0, 1)$, has been recently solved in \cite{RMpaper-STOC,RMpaper-ITTran}. Remarkably, the proof does not use any of the specific properties of RM codes, apart from their symmetry. Indeed, the main technical result consists in showing that any sequence of linear codes with a doubly transitive permutation group achieves capacity on a memoryless erasure channel under bit-MAP decoding. More specifically, the proof analyzes the extrinsic information transfer functions of the codes using the area theorem~\cite{Ashikhmin-it04,RU-2008} and the theory of symmetric monotone boolean functions~\cite{Friedgut-procams96}. Combining these two ingredients with the code symmetry, it is possible to show that a bit-MAP decoding threshold exists and that it coincides with channel capacity. Since RM codes are doubly transitive, they achieve capacity under bit-MAP decoding. In order to extend the result from the bit-MAP error probability to the block-MAP error probability, two strategies are considered in \cite{RMpaper-STOC,RMpaper-ITTran}.

Firstly, an upper bound on the bit-MAP error probability is combined with a lower bound on the minimum distance of the code. This strategy is successful for BCH codes: they are doubly transitive and, therefore, the bit erasure probability converges to $0$; they have a minimum distance which scales as $N/\log(N)$ and, therefore, the block erasure probability converges to $0$. However, the minimum distance of RM codes scales only as $\sqrt{N}$ and this does not suffice to prove directly the desired result.

Secondly, further symmetries of RM codes\footnote{The exact condition required is that the permutation group of the code contains a transitive subgroup isomorphic to ${\rm GL}(n,\mathbb{F}_2)$, the general linear group of degree $n$ over the Galois field $\mathbb{F}_2$.} are exploited in the framework of \cite{Bourgain-gafa97}. In this way, one can show that the bit-MAP threshold is sharp enough so that the block erasure probability converges to $0$. Note that this approach relies heavily on the fact that the transmission is over an erasure channel. 

In this paper, we present a new strategy to compare the bit-MAP and block-MAP decoding thresholds of RM codes. This strategy is based on the careful analysis of the weight distribution of the codes and, as such, it applies to the transmission over any binary memoryless symmetric (BMS) channel. In particular, the flavor of the main result is the following: assume that the bit-MAP error probability decays as $N^{-\delta}$, for some $\delta>0$; then, the block-MAP error probability goes to $0$.

The rest of the paper is organized as follows. Section \ref{sec:ubweight} states the main result and outlines its proof. Section \ref{sec:proofs} contains the proof of the intermediate lemmas and discusses some extensions. Section \ref{sec:concl} concludes the paper with some final remarks.

\section{Main Result} \label{sec:ubweight}

First, let us introduce some definitions. For $k\in \mathbb N$, denote $[k] \triangleq\{1,\cdots,k\}$.
It is well known that the codewords of RM$(n, v)$ are given by the evaluations of the polynomials in $n$ variables of degree at most $v$ over $\mathbb F_2$. With an abuse of notation, we can think of RM$(n, v)$ as the collection of such polynomials $f : {\mathbb F}_2^n \to {\mathbb F}_2$. The normalized weight of a function $f : {\mathbb F}_2^n \to {\mathbb F}_2$ is the normalized number of 1's in it, i.e.,
\begin{equation*}
{\rm wt}(f) = \frac{1}{2^n}|\{x\in {\mathbb F}_2^n : f(x)=1\}|.
\end{equation*}
The cumulative weight distribution of RM$(n, v)$ at a normalized weight $\alpha\in [0, 1]$ is denoted by $W_{n, v}(\alpha)$ and is defined as the number of codewords whose normalized weight is at most $\alpha$, i.e.,
\begin{equation}\label{eq:cwd}
W_{n, v}(\alpha) = |\{f\in \text{RM}(n, v) : {\rm wt}(f)\le \alpha\}|.
\end{equation}

Let us now define exactly the difference between bit-MAP and block-MAP decoding:
\begin{itemize}

	\item The \emph{bit-MAP decoder} outputs the most likely bit value for each position and its bit error probability is denoted by $P_{\rm b}$. 
	
	\item The \emph{block-MAP decoder} outputs the most likely codeword and its block error probability is denoted by $P_{\rm B}$. 

\end{itemize}

The main result of the paper is stated below. 

\begin{theorem}[From bit-MAP to block-MAP] \label{th:block}
Consider a sequence of RM$(n,v_n)$ codes with increasing $n$ and rate $R_n$ converging to $R \in (0,1)$. Assume that each code is transmitted over a BMS channel with Bhattacharyya parameter $z\in (0,1)$ and that the bit error probability of the bit-MAP decoder $P_{\rm b}$ is $O(2^{-n\delta})$, for some $\delta >0$. Then, the block error probability of the block-MAP decoder $P_{\rm B}$ tends to 0, as $n$ goes large. 
\end{theorem}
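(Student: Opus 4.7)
The natural starting point is the Bhattacharyya union bound: for a linear code on a BMS channel with Bhattacharyya parameter $z$,
$$
P_{\rm B} \;\le\; \sum_{w \ge 1} A_w\,z^w,
$$
where $A_w$ is the number of codewords of Hamming weight $w$. An Abel summation rewrites this as $(1-z)\sum_{w=1}^{N-1} W_{n,v_n}(w/N)\,z^w$ up to negligible boundary terms, so the task reduces to controlling $W_{n,v_n}(\alpha)\,z^{\alpha N}$ uniformly enough that the sum is $o(1)$. I would partition the range $\alpha \in (0,1]$ into three regimes and handle each with a different tool.

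For $\alpha \ge \alpha^\star$ with $\alpha^\star \log_2(1/z) > R$, I would use the trivial bound $W_{n,v_n}(\alpha) \le 2^{N R_n}$; since $z^{\alpha N} = 2^{-\alpha N \log_2(1/z)}$, the product is doubly exponentially small in $N$ and this tail is negligible. For $\alpha$ close to the normalized minimum distance $d_n/N = 2^{-(n-v_n)}$, I would appeal to the classical structural results on low-weight codewords of Reed-Muller codes (Kasami-Tokura, together with the subsequent enumerations of Kaufman-Lovett-Porat and others), which bound the number of codewords of weight up to a constant multiple of $d_n$ polynomially in $N$; combined with the factor $z^{\alpha N}$ this range becomes harmless.

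The delicate intermediate regime is where the bit-MAP hypothesis must be leveraged. My goal is to derive a pointwise upper bound on $W_{n,v_n}(\alpha)$ in the middle range from the assumption $P_{\rm b} = O(2^{-n\delta})$, via a Bhattacharyya-style inequality of the form $N\,P_{\rm b} \gtrsim $ (a mid-weight-sensitive functional of the weight enumerator). The intuition is that each codeword of weight $w$ contributes $w$ potentially flipped bits under MAP with probability on the order of $z^w$, so an abundance of such codewords at intermediate weight would be incompatible with bit-MAP decay of $N^{-\delta}$. Inverting this inequality yields the required control on $W_{n,v_n}(\alpha)$ precisely in the window where neither the trivial nor the Kasami-Tokura bound suffices.

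The principal obstacle I anticipate is making the bit-MAP-to-weight-distribution inequality quantitatively strong enough that the resulting bound on $W_{n,v_n}(\alpha)$ beats $z^{-\alpha N}$ pointwise throughout the middle regime, since a merely averaged bound could still allow the Bhattacharyya sum to blow up over a narrow but critical window of weights. A secondary subtlety is that the naive shortcut $P_{\rm B} \le N\,P_{\rm b}/d_n$ fails here: when the rate is bounded away from $0$ and $1$, the minimum distance $d_n = 2^{n-v_n}$ is only polynomial in $N$, so a three-regime decomposition genuinely seems necessary in order to harvest the full $N^{-\delta}$ decay.
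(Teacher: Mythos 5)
There is a genuine gap, and it is structural: your architecture routes everything through the Bhattacharyya union bound $P_{\rm B}\le\sum_w A_w z^w$, but that bound is vacuous in exactly the regime the theorem must cover. For rates above the cutoff rate of the channel (equivalently, whenever $z$ is close enough to $1$ that $2^{NR}z^{cN}$ diverges for the relevant $c$), the sum $\sum_w A_w z^w$ with the \emph{true} weight enumerator of a capacity-achieving code already blows up at intermediate weights, so no upper bound on $A_w$ or on $W_{n,v_n}(\alpha)$ --- however cleverly extracted from the hypothesis $P_{\rm b}=O(2^{-n\delta})$ --- can rescue it. Your high-weight regime has the same defect: it is nonempty only when $z<2^{-R}$, whereas the theorem is claimed for all $z\in(0,1)$. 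The proposed ``bit-MAP-to-weight-distribution inequality'' for the middle regime is the step that would fail: inverting an aggregate bound on $P_{\rm b}$ into pointwise control of $W_{n,v_n}(\alpha)$ and then re-summing via a union bound necessarily loses the cancellations that make the aggregate small in the first place.

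The paper's proof uses the union bound \emph{only} for codewords at distance at most $2^{n(1-\delta/2)}=o(N)$ from the transmitted word, where a refined weight-distribution bound (Lemma \ref{lemma:ubw}, sharpening Kaufman--Lovett--Porat; your ``constant multiple of $d_n$'' window via Kasami--Tokura is far too narrow, since $2^{n(1-\delta/2)}\approx d_n\cdot 2^{n(1/2-\delta/2)}$) beats $z^{-w}$. For all farther codewords it abandons the weight enumerator entirely: introduce the \emph{randomized} bit- and block-MAP decoders, which satisfy $P\le P_{\rm r}\le 2P$ (Lemma \ref{lm:randomized}) and, crucially, are coupled so that the randomized bit-MAP output at position $i$ is the $i$-th marginal of the randomized block-MAP output. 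Hence outputting a codeword at distance $\ge 2^{n(1-\delta/2)}$ forces a fraction $\ge 2^{-n\delta/2}$ of bit errors, giving
\begin{equation*}
P_{\rm B,r}^{\rm h}\;\le\;P_{\rm b,r}\cdot 2^{n\delta/2}\;\le\;2\,P_{\rm b}\cdot 2^{n\delta/2}\;=\;O(2^{-n\delta/2}).
\end{equation*}
This direct coupling between the two randomized decoders, not a weight-enumerator estimate, is how the bit-MAP hypothesis is harvested; it is the ingredient your proposal is missing.
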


\begin{remark}[Case of the BEC]
Consider the special case of transmission over the BEC. Then, by applying Theorem \ref{th:block}, one obtains that RM codes achieve capacity under block-MAP decoding directly from the result on bit-MAP decoding (see eq. (10) after Theorem 20 in \cite{RMpaper-ITTran}), without resorting to the framework of \cite{Bourgain-gafa97}. 
\end{remark}

In order to prove the theorem above, it is useful to introduce a randomized version of the bit-MAP and block-MAP decoders:
\begin{itemize}
	\item The \emph{randomized bit-MAP decoder} outputs each bit value according to its posterior probability and its bit error probability is denoted by $P_{\rm b, r}$. 
	\item The \emph{randomized block-MAP decoder} outputs each codeword according to its posterior probability and its block error probability is denoted by $P_{\rm B, r}$. 
\end{itemize}

The error probabilities of the MAP decoders are related to the error probabilities of their randomized counterparts by the following lemma, which is proved in Section \ref{sec:proofs}. 

\begin{lemma}[MAP vs. randomized MAP]\label{lm:randomized}
Consider transmission of a code $\mathcal{C}$ over a BMS channel and let $P_{\rm b}$, $P_{\rm B}$, $P_{\rm b, r}$ and $P_{\rm B, r}$ be the error probabilities of the bit-MAP, block-MAP, randomized bit-MAP and randomized block-MAP decoders. Then, the following inequalities hold: 
\begin{equation}\label{eq:rand1}
P_{\rm b} \le P_{\rm b, r} \le 2\cdot P_{\rm b},
\end{equation}
\begin{equation}\label{eq:rand2}
P_{\rm B} \le P_{\rm B, r} \le 2\cdot P_{\rm B}.
\end{equation}
\end{lemma}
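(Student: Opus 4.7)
The plan is to establish both pairs of inequalities by conditioning on the channel output $y$ and reducing each statement to a pointwise comparison of the conditional error probabilities. Since each of the four quantities in the lemma is obtained by averaging its conditional analogue over $y$ (and, in the bit case, over the bit position $i$), it suffices to prove the corresponding inequalities in the conditional form. The key observation is that, given $Y=y$, the true transmitted object and the output of the randomized decoder are independent draws from the same posterior distribution, so the randomized error probability is simply $1$ minus the collision probability of that posterior.

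For \eqref{eq:rand1}, I would fix a position $i$ and a received vector $y$, and set $p=\max_{b\in\{0,1\}}\prob(X_i=b\mid Y=y)\ge 1/2$. The bit-MAP decoder has conditional error probability $1-p$, while the randomized bit-MAP decoder has conditional error probability $1-p^2-(1-p)^2=2p(1-p)$. The two desired inequalities $1-p\le 2p(1-p)\le 2(1-p)$ then reduce to $p\ge 1/2$ and $p\le 1$ respectively, both of which are automatic. Averaging over $i$ and $y$ gives \eqref{eq:rand1}.

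For \eqref{eq:rand2}, let $\{q_{\cw}\}_{\cw\in\code}$ be the posterior distribution of the codeword given $y$, and let $q^\star=\max_{\cw} q_{\cw}$. Conditional on $y$, the block-MAP and randomized block-MAP error probabilities equal $1-q^\star$ and $1-\sum_{\cw} q_{\cw}^2$ respectively. The lower bound follows from the crude inequality $\sum_{\cw} q_{\cw}^2\le q^\star\sum_{\cw} q_{\cw}=q^\star$, and the upper bound from $\sum_{\cw} q_{\cw}^2\ge (q^\star)^2\ge 2q^\star-1$. Averaging over $y$ then gives \eqref{eq:rand2}.

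The only point that needs to be spelled out carefully — rather than a real obstacle — is making the independence of the randomized decoder's internal coin flips from both the channel and the message explicit, so that the conditional error probabilities factor as described above. Neither the BMS structure of the channel nor any property of the code $\code$ is actually used in the argument: the lemma holds for any code on any discrete memoryless channel, and the constants $1$ and $2$ on either side are tight (saturated at $p=1/2$ and $p=1$, respectively, in the bit case, and analogously in the block case).
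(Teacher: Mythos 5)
Your proof is correct and follows essentially the same route as the paper: both arguments rest on the observation that, conditioned on $y$, the randomized decoder's error probability is one minus the collision probability of the posterior, and both reduce the factor-of-$2$ bound to $1-\left(\max_{\cw} q_{\cw}\right)^2\le 2\left(1-\max_{\cw} q_{\cw}\right)$. The only (cosmetic) difference is that you carry out the block-case comparison pointwise in $y$, which lets you skip the Jensen's-inequality step the paper uses after averaging, and you verify the lower bounds $P_{\rm b}\le P_{\rm b,r}$, $P_{\rm B}\le P_{\rm B,r}$ directly rather than invoking MAP optimality.
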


A crucial point in the proof of our main result is that for any RM code of sufficiently large block length and any $\beta>0$, the codewords at distance at most $2^{n(1-\beta)}$ from the transmitted codeword have a negligible effect on the block error probability under randomized block-MAP decoding. 

\begin{lemma}[Small distances do not count]\label{lemma:codeword}
Consider a sequence of RM$(n,v_n)$ codes with increasing $n$ and rate $R_n$ converging to $R \in (0,1)$. Assume that each code is transmitted over a BMS channel with Bhattacharyya parameter $z\in (0,1)$. Fix any $\beta > 0$. Then, the probability that the randomized block-MAP decoder outputs an incorrect codeword at Hamming distance at most $2^{n(1-\beta)}$ from the transmitted codeword tends to $0$ as $n$ tends to infinity. 
\end{lemma}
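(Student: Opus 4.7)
The plan is to combine a union bound over low-weight codewords with a Bhattacharyya-type pairwise bound for the randomized block-MAP decoder, and then to appeal to an upper bound on the number of RM codewords of small weight. First, I would reduce to the all-zero transmission: by linearity of the RM code and the symmetry of the BMS channel, the probability of interest is unchanged if one conditions on transmitting the zero codeword, in which case the event becomes $\{\hat C\neq\mathbf 0,\ {\rm wt}(\hat C)\le 2^{n(1-\beta)}\}$, where $\hat C$ denotes the output of the randomized block-MAP decoder. A union bound then yields
\[
\prob[\hat C\neq\mathbf 0,\ {\rm wt}(\hat C)\le 2^{n(1-\beta)}]\;\le\;\sum_{\substack{c\in\,\text{RM}(n,v_n)\\ 0<{\rm wt}(c)\le 2^{n(1-\beta)}}}\prob[\hat C = c\mid C=\mathbf 0].
\]

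Next, I would derive the Bhattacharyya-type estimate $\prob[\hat C = c\mid C=\mathbf 0]\le z^{{\rm wt}(c)}$. Conditional on the received vector $Y$, the randomized decoder outputs $c$ with probability $p(Y|c)/\sum_{c'}p(Y|c')\le p(Y|c)/(p(Y|\mathbf 0)+p(Y|c))$; averaging over $Y\sim p(\cdot|\mathbf 0)$ and using the elementary inequality $\frac{ab}{a+b}\le\sqrt{ab}/2$ together with the product form of the Bhattacharyya parameter for a BMS channel, one obtains $\prob[\hat C = c\mid C=\mathbf 0]\le \sum_Y\sqrt{p(Y|\mathbf 0)\,p(Y|c)}=z^{{\rm wt}(c)}$. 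Grouping the $A_d$ codewords of weight exactly $d$ together, the union bound becomes
\[
\prob[\hat C\neq\mathbf 0,\ {\rm wt}(\hat C)\le 2^{n(1-\beta)}]\;\le\;\sum_{d=d_{\min}}^{2^{n(1-\beta)}}A_d\,z^d,
\]
where the numbers $A_d$ can be extracted from the cumulative weight distribution $W_{n,v_n}$ defined in~\eqref{eq:cwd}.

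The main obstacle, and the point where the ``careful analysis of the weight distribution'' announced in the introduction enters, is showing that this last sum is $o(1)$. The difficulty is that the hypothesis $R_n\to R\in(0,1)$ forces $v_n$ to grow linearly with $n$, so elementary bounds such as $A_d\le\binom{N}{d}$ or $A_d\le 2^{k}$ are far too loose: with up to $2^{n(1-\beta)}$ terms in the sum and $z<1$ fixed, we need $A_d$ to grow no faster than subexponentially in $d$ throughout the range $[d_{\min},2^{n(1-\beta)}]$. I would therefore invoke (or develop in the proof section) a bound on $W_{n,v_n}(\alpha)$ for $\alpha\le 2^{-n\beta}$ of the flavor $W_{n,v_n}(\alpha)\le 2^{\mathrm{poly}(n)\cdot\log(1/\alpha)}$, in the spirit of Kasami--Tokura and Kaufman--Lovett--Porat, strong enough to guarantee $\log_2 A_d = o(d)$ uniformly in the relevant range. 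Combined with $z^d=2^{-d\log_2(1/z)}$, each summand then decays exponentially in $d$, and the total sum vanishes as $n\to\infty$, which completes the proof.
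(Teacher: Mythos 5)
Your overall skeleton matches the paper's: reduce, via a pairwise Bhattacharyya-type estimate for the randomized decoder, to a union bound $\sum_d A_d z^d$ over weights $d \le 2^{n(1-\beta)}$, then kill the sum with an upper bound on the weight distribution. Your derivation of $\prob[\hat C = c \mid C = \mathbf 0] \le z^{{\rm wt}(c)}$ is correct and in fact slightly more direct than the paper's (which removes all codewords but two, invokes $P_{\rm B,r}\le 2P_{\rm B}$, and then cites the standard two-codeword bound $\tfrac12 z^w$); both routes are fine.

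The genuine gap is in the final step, which is precisely where the paper's technical work lives. The bound you propose to invoke, $W_{n,v_n}(\alpha)\le 2^{{\rm poly}(n)\cdot\log(1/\alpha)}$, is false for RM codes in the relevant regime: since $R_n\to R\in(0,1)$ forces $v_n = \frac n2(1+o(1))$, the counting lower bound $W_{n,v}(2^{-\ell})\ge 2^{n\ell+\binom{n-\ell}{v-\ell}}$ quoted in the paper shows that for $\alpha=2^{-n\beta}$ the quantity $\log_2 W_{n,v_n}(\alpha)$ is exponentially large in $n$ (of order $\binom{n(1-\beta)}{n(1/2-\beta)}$), not polynomial. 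What actually saves the argument is the more delicate estimate of Lemma \ref{lemma:ubw}, which after substituting $\ell=n-\lceil\log_2 w\rceil-1$ and $\varepsilon=1/2$ gives $\log_2 c_w\lesssim n^2\bigl(n^2+w^{h_2(\theta_w)}\bigr)$ with the entropy exponent $h_2(\theta_w)$ bounded away from $1$ uniformly over $w\in[2^{n-v_n},2^{n(1-\beta)}]$; thus $\log_2 c_w$ is still exponential in $n$ but is $o\bigl(w\log_2(1/z)\bigr)$, which is all that is needed. Verifying this requires (i) first pinning down $v_n\in\bigl(\frac n2(1-\frac\beta2),\frac n2(1+\frac\beta2)\bigr)$ from the rate hypothesis, a step your proposal omits entirely but which is essential to control $h_2(\theta_w)$, and (ii) the refinement of the Kaufman--Lovett--Porat bound, since, as the paper notes, their Theorem 3.1 as stated is not tight enough for this purpose. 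So you have correctly located the obstacle but not overcome it: the key lemma is missing and the form you guessed for it would not hold.
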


The proof of Lemma \ref{lemma:codeword} can be found in Section \ref{sec:proofs} and it relies on the following upper bound on the weight distribution of RM codes.

\begin{lemma}[Upper bound on weight distribution]\label{lemma:ubw}
Consider the code RM$(n, v)$. Pick an integer $\ell \in [v-1]$ and $\varepsilon\in (0, 1/2]$. Set $\alpha = 2^{-\ell}(1-\varepsilon)$. Then,
\begin{equation*}
W_{n, v}(\alpha) \le (1/\varepsilon)^{ c(v+2)^2(n\ell+ \sum_{i=0}^{v-\ell} \binom{n-\ell}{i})},
\end{equation*} 
for some universal constant $c\in \mathbb R$ which does not depend on any other parameter. 
\end{lemma}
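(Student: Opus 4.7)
The plan is to parameterize each low-weight codeword $f \in \text{RM}(n, v)$ satisfying $\text{wt}(f) \le 2^{-\ell}(1-\varepsilon)$ by (i) an ordered tuple of $\ell$ affine forms $(L_1, \ldots, L_\ell)$ selecting an $(n-\ell)$-dimensional affine subspace, and (ii) a residual polynomial $g \in \text{RM}(n-\ell, v-\ell)$. The number of tuples in (i) is at most $2^{\ell(n+1)}$, which matches the $n\ell$ term in the claimed exponent up to the constant $c$, while the number of choices of $g$ is at most $2^{\sum_{i=0}^{v-\ell}\binom{n-\ell}{i}}$, matching the second term. The $(1/\varepsilon)^{c(v+2)^2}$ slack is paid to cover the gap between a clean structural description (products of affine forms times a degree-$(v-\ell)$ polynomial, \emph{à la} Kasami--Tokura--Azumi) and a general low-weight codeword.

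The starting point is the restriction identity: writing $f(y, x) = \sum_{S \subseteq [\ell]} \chi_S(y)\, g_S(x)$ for $y \in \mathbb{F}_2^\ell$, $x \in \mathbb{F}_2^{n-\ell}$, and $\chi_S(y) = \prod_{i \in S} y_i$, one has
\[
\sum_{y \in \mathbb{F}_2^\ell} \text{wt}\bigl(f(y,\cdot)\bigr) \;=\; 2^\ell\, \text{wt}(f) \;\le\; 1-\varepsilon.
\]
Since the $2^\ell$ restriction-weights sum to strictly less than $1$, most of them must be very small. A bucketing and pigeonhole argument over the $O(\log(1/\varepsilon))$ natural weight scales then lets one associate to each $f$ a preferred affine hyperplane $H$ on which $\text{wt}(f|_H) \le 2^{-(\ell-1)}(1-\varepsilon')$ for some $\varepsilon' \gtrsim \varepsilon$, at a multiplicative cost of $(1/\varepsilon)^{O(1)}$ per level.

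The formal argument is by induction on $\ell$. In the inductive step, $f|_H \in \text{RM}(n-1, v)$ is controlled at level $\ell-1$, and the lifting $f|_H \to f$ brings in a complementary codeword in $\text{RM}(n-1, v-1)$ via the Plotkin decomposition $f=(u, u+w)$, whose weight is governed by a complementary pigeonhole bucket. Applying the hypothesis to each piece and summing over the buckets yields the recursion, and the prefactor $c(v+2)^2$ accumulates by combining the constant losses across the $\le v$ levels of the recursion and the $O(v)$ distinct code degrees encountered. The base case $\ell=1$ reduces to bounding the number of RM codewords noticeably biased toward $0$, handled by a direct combinatorial argument on the Fourier/polynomial representation of $f - 1/2$.

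The principal obstacle is simultaneous control of the slack $\varepsilon$ and the parameterization multiplicity. Each recursive step threatens to both shrink $\varepsilon$ by a constant factor and multiply the number of valid $(L_1, \ldots, L_\ell, g)$ tuples per $f$; to land at the exponent $c(v+2)^2\bigl(n\ell + \sum_{i=0}^{v-\ell}\binom{n-\ell}{i}\bigr)$ and not something much worse, one needs a tracking invariant guaranteeing that (a) $\varepsilon$ degrades only by a constant per level, so the total loss is $(1/\varepsilon)^{O(v)}$, and (b) the overcounting is absorbed into the $n\ell$ and $\dim \text{RM}(n-\ell, v-\ell)$ counts rather than inflating them. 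Reconciling these two constraints through a single uniform choice of bucketing thresholds is where the detailed combinatorial work of the lemma lies.
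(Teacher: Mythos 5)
Your proposal takes a genuinely different route from the paper, and as sketched it has a gap that I do not believe can be closed along the lines you describe. The paper does not attempt any structural classification or Plotkin-type recursion. Instead it invokes the self-correction result of Kaufman--Lovett--Porat (Lemma \ref{lemma:algo}): a function of weight at most $2^{-\ell}(1-\varepsilon)$ is reconstructed, up to a $\delta$-fraction of errors, by a universal algorithm that sees only $t\le c(\log_2(1/\delta)\log_2(1/\varepsilon)+\log_2(1/\delta)^2)$ of its $\ell$-iterated derivatives. Choosing $\delta=2^{-v-1}$ and using the minimum distance $2^{n-v}$, two distinct low-weight codewords cannot produce the same output function, so $W_{n,v}(\alpha)$ is bounded by the number of possible inputs to the algorithm: $2^{n\ell}$ choices per direction tuple $Y_i$ and $2^{\sum_{j=0}^{v-\ell}\binom{n-\ell}{j}}$ choices per derivative (a degree-$(v-\ell)$ polynomial on an $(n-\ell)$-dimensional quotient), raised to the power $t=O(v^2\log_2(1/\varepsilon))$. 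This is where every factor in the exponent comes from; no induction on $\ell$ or on $n$ is performed.

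The gap in your plan is the bookkeeping of the recursion, which you correctly identify as ``where the detailed combinatorial work lies'' but then do not resolve --- and the arithmetic suggests it cannot be resolved in the form stated. In the Plotkin split $f=(u,u+w)$ with $u\in\mathrm{RM}(n-1,v)$ and $w\in\mathrm{RM}(n-1,v-1)$, the constraint $\mathrm{wt}(u)+\mathrm{wt}(u+w)=2\,\mathrm{wt}(f)\le 2^{-\ell+1}(1-\varepsilon)$ only guarantees that the restriction to your preferred hyperplane lives at level $\ell-1$ \emph{with the degree still equal to $v$}. Applying the inductive hypothesis to $\mathrm{RM}(n-1,v)$ at level $\ell-1$ produces the term $\sum_{i=0}^{v-\ell+1}\binom{n-\ell}{i}$ in the exponent, which exceeds the target $\sum_{i=0}^{v-\ell}\binom{n-\ell}{i}$ by $\binom{n-\ell}{v-\ell+1}$; in the regime relevant to Lemma \ref{lemma:codeword} ($v\approx n/2$, $\ell$ small) this single extra binomial coefficient is comparable to the entire sum, so the induction does not close. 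The alternative of recursing on $n$ with $\ell$ fixed accumulates one multiplicative factor of $W_{n-j,v-1}(2^{-(\ell-1)}(1-\varepsilon))$ per variable, giving an exponent with an extra factor of order $n$ --- again fatal, since the application needs the exponent to be $O(n^2)$ plus the binomial sum, not $n$ times the binomial sum. Getting the degree to drop in lockstep with $\ell$ is exactly the difficulty that the derivative-based argument sidesteps: an $\ell$-iterated derivative of a degree-$v$ polynomial automatically has degree $v-\ell$ and lives on an $(n-\ell)$-dimensional space, and the price for needing only $O(v^2\log_2(1/\varepsilon))$ of them is the $(1/\varepsilon)^{c(v+2)^2}$ base. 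If you want an elementary proof, you would need to supply the missing invariant explicitly; as written, the proposal restates the lemma's difficulty rather than overcoming it.
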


The study of the weight distribution of RM codes is a classical problem in coding theory~\cite{Sloane-it70,Kasami-it70,Kasami-ic76}, which culminated in obtaining asymptotically tight bounds for fixed order $v$ and asymptotic $n$~\cite{Kaufman-it12}. These bounds were further refined in \cite{Abbe-it15,Abbe-stoc15}. In particular, Lemma \ref{lemma:ubw} is an improvement of the result stated in~\cite[Theorem 3.1]{Kaufman-it12} and it is proved in Section \ref{sec:proofs}. Finally, we can proceed with the proof of Theorem \ref{th:block}.

\begin{proof}[Proof of Theorem \ref{th:block}]
Let $P_{\rm B, r}^{\rm l}$ be the probability that the randomized block-MAP decoder outputs an incorrect codeword whose Hamming distance from the transmitted codeword is at most $2^{n(1-\delta/2)}$. Similarly, let $P_{\rm B, r}^{\rm h}$ be the probability that the randomized block-MAP decoder outputs an incorrect codeword whose Hamming distance from the transmitted codeword is at least $2^{n(1-\delta/2)}$. Then,
\begin{equation*}
P_{\rm B} \le P_{\rm B, r} = P_{\rm B, r}^{\rm l} + P_{\rm B, r}^{\rm h}, 
\end{equation*}
where the inequality comes from \eqref{eq:rand2}. By Lemma \ref{lemma:codeword}, we have that $P_{\rm B, r}^{\rm l}$ tends to $0$ as $n$ goes large. Hence, in order to prove the claim, it suffices to show that $P_{\rm B, r}^{\rm h}$ tends to $0$ as $n$ goes large.

To do so, we first upper bound $P_{\rm B, r}^{\rm h}$ as a function of $P_{\rm b, r}$, by adapting the proof of (13.51) in \cite[pp. 225-226]{MacKay-2003}. Let $x=(x_1, \cdots, x_N)$ denote a codeword and $y$ the channel output. By definition, the randomized bit-MAP decoder outputs the value in position $i$ according to the distribution $p(x_i\mid y)$. However, we can draw a sample from $p(x_i\mid y)$ also by first sampling from the joint distribution $p(x\mid y)$ and then discarding all positions except position $i$. Now, let $A$ be the event in which the Hamming distance between $x$ and the transmitted codeword is at least $2^{n(1-\delta/2)}$. Thus, 
\begin{equation}\label{eq:boundpbr}
\begin{split}
P_{\rm b, r} &= {\mathbb P}( A )\cdot  {\mathbb P}(\mbox{bit error}\mid A)+{\mathbb P}( A^{\rm c} )\cdot  {\mathbb P}(\mbox{bit error}\mid A^{\rm c})\\
&\ge  {\mathbb P}( A )\cdot  {\mathbb P}(\mbox{bit error}\mid A) \ge P_{\rm B, r}^{\rm h}\cdot 2^{-n\delta/2},
\end{split}
\end{equation}
where $A^{\rm c}$ denotes the complement of $A$. To prove the last inequality, note that ${\mathbb P}( A ) = P_{\rm B, r}^{\rm h}$ and that, since $x$ has Hamming distance at least $2^{n(1-\delta/2)}$ from the transmitted codeword, at least a fraction $2^{-n\delta/2}$ of the bits in $x$ is decoded incorrectly by the randomized bit-MAP decoder. Finally,  
\begin{equation*}
P_{\rm B, r}^{\rm h} \stackrel{\mathclap{\mbox{\footnotesize(a)}}}{\le} P_{\rm b, r}\cdot 2^{n\delta/2}\stackrel{\mathclap{\mbox{\footnotesize(b)}}}{\le} 2\cdot P_{\rm b}\cdot 2^{n\delta/2},
\end{equation*}
where (a) is obtained from \eqref{eq:boundpbr} and (b) from \eqref{eq:rand1}. Since, by hypothesis, $P_{\rm b}$ is $O(2^{-n\delta})$, the result is readily proved.
\end{proof}

\section{Proof of Lemmas and Extensions} \label{sec:proofs}

We start by proving Lemma \ref{lm:randomized}.

\begin{proof}[Proof of Lemma \ref{lm:randomized}]
The inequalities $P_{\rm b} \le P_{\rm b, r}$ and $P_{\rm B} \le P_{\rm B, r}$ follow from the fact that the MAP decoder is, by definition, an optimal decoder in the sense that it minimizes the error probability.

In order to prove the other inequality in \eqref{eq:rand2}, let $x\in \mathcal{C}$ denote a codeword, $y\in \mathcal Y$ the channel output, and $\hat{x}_{\rm B}(y)$ the estimate provided by the block-MAP decoder given the channel output $y$. Denote by $\mathds{1}(\cdot)$ the indicator function of an event. Then, we can rewrite $P_{\rm B}$ as
\begin{equation}\label{eq:compPB}
\begin{split}
P_{\rm B} &= \sum_{x\in \mathcal{C}} p(x) \sum_{y\in \mathcal{Y}} p(y\mid x)\mathds{1}(\hat{x}_{\rm B}(y)\neq x)\\
&=\sum_{y\in \mathcal{Y}} p(y)\sum_{x\in \mathcal{C}} p(x\mid y)\mathds{1}(\hat{x}_{\rm B}(y)\neq x)\\
&\stackrel{\mathclap{\mbox{\footnotesize(a)}}}{=}\sum_{y\in \mathcal{Y}} p(y)\sum_{x\in \mathcal{C}\setminus \hat{x}_{\rm B}(y)} p(x\mid y)\\
&=\sum_{y\in \mathcal{Y}} p(y)\left(1-p(\hat{x}_{\rm B}(y) \mid y)\right)\\
&\stackrel{\mathclap{\mbox{\footnotesize(b)}}}{=}1-\sum_{y\in \mathcal{Y}} p(y)\cdot p(\hat{x}_{\rm B}(y) \mid y),
\end{split}
\end{equation}
where in (a) we use that the estimate $\hat{x}_{\rm B}(y)$ provided by the block-MAP decoder is equal to a fixed codeword (more specifically, to the most likely one) with probability $1$, and in (b) we use that $\sum_{y\in \mathcal{Y}} p(y)=1$.  

Similarly, let $\hat{x}_{\rm B, r}(y)$ be the estimate provided by the randomized block-MAP decoder given the channel output $y$. Then, the following chain of inequalities holds
\begin{equation}\label{eq:compPB2}
\begin{split}
P_{\rm B, r} &= \sum_{x\in \mathcal{C}} p(x) \sum_{y\in \mathcal{Y}} p(y\mid x)\mathds{1}(\hat{x}_{\rm B, r}(y)\neq x)\\
&=\sum_{y\in \mathcal{Y}} p(y)\sum_{x\in \mathcal{C}} p(x\mid y)\mathds{1}(\hat{x}_{\rm B, r}(y)\neq x)\\
&\stackrel{\mathclap{\mbox{\footnotesize(a)}}}{=}\sum_{y\in \mathcal{Y}} p(y)\sum_{x\in \mathcal{C}} p(x\mid y)(1-p(x\mid y))\\
&\stackrel{\mathclap{\mbox{\footnotesize(b)}}}{\le}\sum_{y\in \mathcal{Y}} p(y)\left(1-\left(\max_{x\in \mathcal{C}}p(x \mid y)\right)^2\right)\\
&=\sum_{y\in \mathcal{Y}} p(y)\left(1-p(\hat{x}_{\rm B}(y) \mid y)^2\right)\\
& \stackrel{\mathclap{\mbox{\footnotesize(c)}}}{=} 1-\sum_{y\in \mathcal{Y}} p(y)\cdot p(\hat{x}_{\rm B}(y) \mid y)^2\\
&\stackrel{\mathclap{\mbox{\footnotesize(d)}}}{\le} 1-\left(\sum_{y\in \mathcal{Y}} p(y)\cdot p(\hat{x}_{\rm B}(y) \mid y)\right)^2\\
& \stackrel{\mathclap{\mbox{\footnotesize(e)}}}{=} 1- (1- P_{\rm B})^2 \le 2\cdot P_{\rm B}.
\end{split}
\end{equation}
To prove the equality (a), we use that the estimate $\hat{x}_{\rm B, r}(y)$ provided by the randomized block-MAP decoder is equal to $x$ with probability $p(x\mid y)$. To prove inequality (b), we use that, given $m$ real numbers $1\ge p_1\ge \cdots \ge p_m\ge 0$ with $\sum_{j=1}^m p_j=1$, then 
\begin{equation*}
\begin{split}
\sum_{j=1}^m p_j(1-p_j) &= (1-p_1)\sum_{j=1}^m p_j\frac{1-p_j}{1-p_1}\\
&= (1-p_1)\left(p_1 + \sum_{j=2}^m p_j\frac{1-p_j}{1-p_1}\right)\\
&\le (1-p_1)\left(p_1 + \sum_{j=2}^m p_j\frac{1}{1-p_1}\right)\\
&= (1-p_1)\left(p_1 + 1\right) = 1-p_1^2.\\
\end{split}
\end{equation*}
To prove the equality (c), we use that $\sum_{y\in \mathcal{Y}} p(y)=1$. Inequality (d) follows from Jensen's inequality and equality (e) uses \eqref{eq:compPB}.

In order to prove the analogous inequality $P_{\rm b, r}\le 2\cdot P_{\rm b}$, one possibility is to write expressions similar to \eqref{eq:compPB} and \eqref{eq:compPB2} for the bit error probability of position $i$ under bit-MAP decoding and under randomized bit-MAP decoding, respectively. Otherwise, one can follow the simpler argument of \cite[p. 225]{MacKay-2003}, which we reproduce here for the sake of completeness. 

Consider first the case of a single bit with posterior probability $\{p_0, p_1\}$. Then,
$P_{\rm b} = \min(p_0, p_1)$. Furthermore, the probability that the randomized bit-MAP decoder makes a correct decision is $p_0^2 + p_1^2$, since its output and the ground truth follow the same distribution. Thus,
\begin{equation}\label{eq:ineqbit}
P_{\rm b, r} = 2p_0p_1 \le 2\min(p_0, p_1) = 2\cdot P_{\rm b}.
\end{equation}
In general, $P_{\rm b}$ and $P_{\rm b, r}$ are just the averages of many such error probabilities. Therefore, \eqref{eq:ineqbit} holds for the transmission of any number of bits.
\end{proof}

Now, let us state some more definitions and intermediate results which will be useful in the following.

Let $f : {\mathbb F}_2^n \to {\mathbb F}_2$ be a function. The derivative of $f$ in direction $y\in {\mathbb F}_2^n$ is denoted by $\Delta_y f: {\mathbb F}_2^n \to {\mathbb F}_2$ and it is defined as
\begin{equation*}
\Delta_y f(x) = f(x+y)+f(x).
\end{equation*}
Similarly, the $k$-iterated derivative of $f$ in directions $Y=(y_1, \cdots, y_k) \in ({\mathbb F}_2^n)^k$ is denoted by $\Delta_Y f: {\mathbb F}_2^n \to {\mathbb F}_2$ and it is defined as 
\begin{equation*}
\Delta_Y f(x) = \Delta_{y_1}\Delta_{y_2}\cdots\Delta_{y_k} f(x).
\end{equation*}
A simple manipulation yields 
\begin{equation*}
\Delta_Y f(x) = \sum_{I\subseteq [k]} f\left(x+\sum_{i\in I}y_i \right), 
\end{equation*}
from which it is clear that the order of $y_1, \cdots, y_k$ is irrelevant in the computation of $\Delta_Y f(x)$ and, therefore, we can think of $Y$ as a multi-set of size $k$.

Note that if $f$ is a degree $v$ polynomial, then its derivatives have degree at most $v-1$ and, consequently, its $k$-iterated derivatives have degree at most $d-k$. Furthermore, as pointed out in \cite[Section III]{Abbe-it15}, we have that $\Delta_y f(x) = \Delta_y f(x+y)$. Thus, in general, $\Delta_Y f(x)$ is determined by its values on the quotient space ${\mathbb F}_2^n \setminus \langle Y \rangle$, where $\langle Y \rangle$ denotes the space spanned by the vectors in $Y$.

The following lemma plays a central role in the proof of the upper bound on the weight distribution.

\begin{lemma}[Lemma 2.1 in \cite{Kaufman-it12}]\label{lemma:algo}
Pick an integer $\ell \ge 1$ and $\varepsilon\in (0, 1)$. Consider a function $f : {\mathbb F}_2^n \to {\mathbb F}_2$ s.t. ${\rm wt}(f)\le 2^{-\ell}(1-\varepsilon)$. Pick any $\delta > 0$. There exists a universal algorithm $\mathcal A$ (which does not depend on $f$) with the following properties:
\begin{enumerate}
\item $\mathcal A$ has two inputs: $x\in {\mathbb F}_2^n$ and $Y_1, \cdots, Y_t \in ({\mathbb F}_2^n)^{\ell}$.
\item $\mathcal A$ has oracle access to the $\ell$-iterated derivatives $\Delta_{Y_1}f, \cdots, \Delta_{Y_t}f$. 
\end{enumerate}
Then, for $t\le c (\log_2(1/\delta)\log_2(1/\varepsilon)+\log_2(1/\delta)^2)$,
where $c$ is a universal constant, there exists a choice of $Y_1, \cdots, Y_t$ s.t. 
\begin{equation}
{\mathbb P}({\mathcal A}(x; Y_1, \cdots, Y_t, \Delta_{Y_1}f, \cdots, \Delta_{Y_t}f) = f(x))\ge 1-\delta,
\end{equation}
where the probability distribution is over $x\in \mathbb F_2^n$ chosen uniformly at random.
\end{lemma}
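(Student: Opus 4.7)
The plan is built around the key algebraic identity
$$f(x) = \Delta_Y f(x) \oplus \bigoplus_{\emptyset \neq I \subseteq [\ell]} f\!\left(x + \sum_{i \in I} y_i\right),$$
which follows immediately from expanding the iterated derivative as a sum over subsets. If all $2^\ell - 1$ ``correction terms'' on the right vanish, then $\Delta_Y f(x)$ equals $f(x)$ exactly, suggesting a first naive estimator: output $\Delta_Y f(x)$ itself.

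First I would establish a single-trial bound. For $Y = (y_1, \ldots, y_\ell)$ drawn uniformly at random from $({\mathbb F}_2^n)^\ell$, each nonempty partial sum $\sum_{i \in I} y_i$ is marginally uniform on ${\mathbb F}_2^n$, so the weight hypothesis gives $f(x + \sum_{i \in I} y_i) = 0$ with probability at least $1 - 2^{-\ell}(1-\varepsilon)$ for each nonempty $I$. A union bound over the $2^\ell - 1$ nonempty subsets of $[\ell]$ then yields
$${\mathbb P}_{x, Y}\!\left[\Delta_Y f(x) = f(x)\right] \ge 1 - (2^\ell - 1) \cdot 2^{-\ell}(1-\varepsilon) \ge \varepsilon,$$
and by averaging over $Y$ one can fix a single tuple so that the same bound holds in probability over $x$ alone. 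A single random tuple thus already gives a ``weak'' estimator of $f(x)$ with per-trial success probability at least $\varepsilon$.

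Next I would amplify this per-trial correctness from $\ge \varepsilon$ to $\ge 1-\delta$, within the allowed budget $t = O(\log(1/\delta)\log(1/\varepsilon) + \log^2(1/\delta))$. The plan is a two-level boosting scheme. At the inner level, $O(\log(1/\varepsilon))$ tuples are grouped together and combined into a single ``block estimator'' whose correctness probability exceeds a fixed constant, say $2/3$. At the outer level, $O(\log(1/\delta))$ independent block estimators are combined by majority vote, and a standard Chernoff argument drives the overall failure probability below $\delta$; the $\log^2(1/\delta)$ term in the bound on $t$ absorbs the slack in the Chernoff tail.

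The hard part will be the inner weak-to-strong step. When $\varepsilon < 1/2$ each individual guess $\Delta_{Y_j} f(x)$ is more often wrong than right, so a plain majority vote over $\log(1/\varepsilon)$ samples cannot work. The remedy is to let the algorithm query each oracle $\Delta_{Y_j} f$ at many correlated points and chain the queries through the identity above, so that pairs of correction bits telescope and cancel. Because those correction bits are values of a low-weight function at essentially uniform inputs, they are heavily biased toward $0$, and a carefully designed chain of $O(\log(1/\varepsilon))$ oracle calls should turn the $\ge \varepsilon$ single-trial bias into a $\ge 2/3$ block-level bias. It is precisely at this step that the algebraic structure of iterated derivatives acting on a low-weight function, rather than any generic noisy self-correction argument, will be decisive.
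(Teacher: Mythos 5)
The paper does not actually prove this lemma: it is imported verbatim as Lemma 2.1 of \cite{Kaufman-it12}, and the text explicitly refers the reader to Section II of that reference for the proof. So the only question is whether your sketch would stand on its own as a proof, and it would not. Your first step is fine and matches the natural opening move: the expansion $\Delta_Y f(x)=\sum_{I\subseteq[\ell]}f\left(x+\sum_{i\in I}y_i\right)$, the observation that each nonempty partial sum is marginally uniform, and the union bound giving $\mathbb{P}_{x,Y}\left[\Delta_Y f(x)=f(x)\right]\ge 1-(2^{\ell}-1)2^{-\ell}(1-\varepsilon)\ge\varepsilon$ are all correct, as is the averaging step that fixes the tuples.

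The gap is exactly where you flag ``the hard part'': amplifying a per-trial success probability of only $\varepsilon$, which for small $\varepsilon$ is far below $1/2$, up to $1-\delta$ using only $t\le c(\log_2(1/\delta)\log_2(1/\varepsilon)+\log_2(1/\delta)^2)$ derivatives. This is not a routine boosting step that can be deferred; it is the entire content of the lemma. Indeed, if the algorithm retained only the $t$ bits $\Delta_{Y_1}f(x),\dots,\Delta_{Y_t}f(x)$, each equal to $f(x)$ plus an independent noise bit whose error probability is merely bounded above by $1-\varepsilon$, then no post-processing could recover $f(x)$ reliably: for $\varepsilon\le 1/2$ the noise bits could each equal $1$ with probability exactly $1/2$, in which case the estimates carry zero information about $f(x)$. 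So the algorithm must genuinely exploit oracle access to the whole functions $\Delta_{Y_j}f$ at points other than $x$, and the specific mechanism by which $O(\log_2(1/\varepsilon))$ correlated queries convert an $\varepsilon$-bias into a constant bias is precisely what has to be constructed and analyzed; one must also verify that the resulting query count lands within the stated budget. Your sketch only gestures at this (``chain the queries \dots so that pairs of correction bits telescope and cancel'') without defining the chain or proving that the cancellation occurs with the required probability. Until that inner weak-to-strong step is made concrete, the proposal does not prove the lemma; you would need either to supply that construction or, as the paper does, simply to cite \cite[Section II]{Kaufman-it12}.
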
 

In words, Lemma \ref{lemma:algo} says that any function of small normalized weight can be approximated arbitrarily well, given a sufficient amount of its derivatives. For a proof of this result, we refer the interested reader to \cite[Section II]{Kaufman-it12}. 

At this point we are ready to prove Lemma \ref{lemma:ubw}.

\begin{proof}[Proof of Lemma \ref{lemma:ubw}]
Pick $\delta = 2^{-v-1}$. Apply the universal algorithm $\mathcal A$ to all the codewords $f\in \text{RM}(n, v)$. Denote by $\mathcal H$ the family of functions obtained by doing so. In other words, $\mathcal H$ is the set of outputs of $\mathcal A$ when the input is a degree $v$ polynomial in $n$ variables. 

By Lemma \ref{lemma:algo}, for any $f\in \text{RM}(n, v)$ s.t. ${\rm wt}(f)\le \alpha$, there exists $h\in \mathcal H$ which differs from $f$ in a fraction $<\delta$ of points of ${\mathbb F}_2^n$.

Suppose now that there exists $h\in \mathcal H$ which is obtained by applying the algorithm $\mathcal A$ to two distinct codewords $f_1, f_2 \in$ RM$(n,v)$ s.t. ${\rm wt}(f_1)\le \alpha$ and ${\rm wt}(f_2)\le \alpha$. Then, $h$ differs from $f_1$ in a fraction $<\delta$ of points and $h$ differs from $f_2$ in a fraction $<\delta$ of points. Therefore, $f_1$ and $f_2$ can differ in a fraction $<2\delta=2^{-v}$ of points. As the minimum distance of the code is $2^{n-v}$, we conclude that $f_1=f_2$, and, consequently, we can associate to each $f\in \text{RM}(n, v)$ s.t. ${\rm wt}(f)\le \alpha$ a unique $h\in \mathcal H$. This implies that
\begin{equation}\label{eq:boundcardH}
W_{n, v}(\alpha) \le |\mathcal H|.
\end{equation}
The remainder of the proof consists in upper bounding the cardinality of $\mathcal H$.

Recall that the algorithm $\mathcal A$ takes as input:
\begin{enumerate}
\item the $t$ directions $Y_1, \cdots, Y_t \in ({\mathbb F}_2^{n})^{\ell}$ with $t\le c(\log_2(1/\delta)\log_2(1/\varepsilon)+\log_2(1/\delta)^2)$,
\item the $t$ $\ell$-iterated derivatives of the input.
\end{enumerate}
The number of different possibilities for each $Y_i$ (with $i\in [t]$) is $2^{n\ell}$. Given $Y_i$, the number of possible functions $\Delta_{Y_i} f$ is upper bounded by the number of polynomials of degree at most $v-\ell$ defined in the space ${\mathbb F}_2^n \setminus \langle Y_i \rangle$. As this space has dimension $n-\ell$, the number of possible functions $\Delta_{Y_i} f$ is $2^{\sum_{j=0}^{v-\ell} \binom{n-\ell}{j}}$. 

By putting everything together, we conclude that
\begin{equation*}
\begin{split}
|\mathcal H| &\le 2^{t(n\ell+\sum_{j=0}^{v-\ell} \binom{n-\ell}{j})} \\
&\stackrel{\mathclap{\mbox{\footnotesize(a)}}}{\le} 2^{c((v+1)\cdot\log_2(1/\varepsilon)+(v+1)^2)(n\ell+\sum_{j=0}^{v-\ell} \binom{n-\ell}{j})}\\
&\stackrel{\mathclap{\mbox{\footnotesize(b)}}}{\le} 2^{c((v+1)\cdot\log_2(1/\varepsilon)+(v+1)^2\cdot\log_2(1/\varepsilon))(n\ell+\sum_{j=0}^{v-\ell} \binom{n-\ell}{j})}\\
&\le (1/\varepsilon)^{c(v+1)(v+2)(n\ell+\sum_{j=0}^{v-\ell} \binom{n-\ell}{j})}\\
&\le (1/\varepsilon)^{c(v+2)^2(n\ell+\sum_{j=0}^{v-\ell} \binom{n-\ell}{j})},
\end{split}
\end{equation*}
where (a) combines the upper bound on $t$ with the choice $\delta = 2^{-v-1}$, and (b) uses that $\log_2(1/\varepsilon)\ge 1$ for $\varepsilon \in (0, 1/2]$. 
\end{proof} 

As it was pointed out in Section \ref{sec:ubweight}, Lemma \ref{lemma:ubw} is a refinement of~\cite[Theorem 3.1]{Kaufman-it12}. More specifically, the upper bound \eqref{eq:boundcardH} comes from the proof of~\cite[Theorem 3.1]{Kaufman-it12} and our refinement consists in a more accurate upper bound on $|\mathcal{H}|$. Note that this improvement is necessary to obtain the desired result on the error probability of the randomized block-MAP decoder, as the upper bound on the weight distribution presented in~\cite[Theorem 3.1]{Kaufman-it12} is not tight enough for this purpose.

Let us proceed with the proof of Lemma \ref{lemma:codeword}.

\begin{proof}[Proof of Lemma \ref{lemma:codeword}]
Since the claim to be proved is stronger when $\beta$ is smaller, we can assume without loss of generality that $\beta \in (0, 1/2)$. Suppose now that, for $n$ large enough,
\begin{equation} \label{eq:condr}
v_n > \frac{n}{2}\left(1+\frac{\beta}{2}\right).
\end{equation}
Then,
\begin{equation*}
\begin{split}
R_n &= \frac{1}{2^n} \sum_{i=0}^{v_n} \binom{n}{i} = 1-\frac{1}{2^n} \sum_{i=0}^{n-v_n-1} \binom{n}{i} \\
&\ge 1-\frac{2^{nh_2(\frac{n-v_n-1}{n})}}{2^n},
\end{split}
\end{equation*}
with $h_2(x)=-x\log_2(x)-(1-x)\log_2(1-x)$ and where the last inequality is an application of \cite[Lemma 4.7.2]{Ash-1990} (or, equivalently, of \cite[Eqn.~(1.59)]{RU-2008}) as $n-v_n-1 \le n/2$. This means that, for any $\beta\in (0, 1/2)$, if $v_n$ satisfies \eqref{eq:condr}, then the rate $R_n$ tends to 1. Similarly, it is easy to see that if $v_n < \displaystyle\frac{n}{2}\left(1-\frac{\beta}{2}\right)$, then the rate $R_n$ tends to 0. Since $R_n$ converges to $R\in (0, 1)$, we have that, for $n$ large enough,
\begin{equation} \label{eq:condrnew}
v_n \in \left( \frac{n}{2}\left(1-\frac{\beta}{2}\right), \frac{n}{2}\left(1+\frac{\beta}{2}\right)\right).
\end{equation}

Let $x$ denote a codeword and $y$ the channel output. Then, the posterior probability $p(x\mid y)$ can be written as
\begin{equation}\label{eq:posterior}
p(x\mid y) = \frac{p(y\mid x)p(x)}{\sum_{\tilde{x}}p(y\mid \tilde{x})p(\tilde{x})} = \frac{p(y\mid x)}{\sum_{\tilde{x}}p(y\mid \tilde{x})},
\end{equation}
where the last equality comes from the fact that the codeword is chosen uniformly from the codebook. From \eqref{eq:posterior} we deduce that, by adding codewords, the posterior probability $p(x\mid y)$ decreases. Then, the probability that the randomized block-MAP decoder outputs a specific codeword $x$ increases if we remove all codewords except $x$ and the codeword that was actually transmitted. By using \eqref{eq:rand2}, we can upper bound such a probability by 2 times the block error probability of the non-randomized block-MAP decoder. Eventually, this last probability is upper bounded by $\frac{1}{2}z^{w}$, where $w$ is the Hamming weight of $x$~\cite[Lemma 4.67]{RU-2008}.

The argument above proves that the probability that the randomized block-MAP decoder outputs a codeword of weight $w \in [2^n]$ is upper bounded by $z^{w}$. Hence, by applying the union bound, the probability that the randomized block-MAP decoder outputs a codeword of normalized weight at most $2^{-n\beta}$ is upper bounded by
\begin{equation*}
\sum_{w = 1}^{\lceil 2^{n(1-\beta)}\rceil} z^{w} c_w,
\end{equation*}
where $c_w$ denotes the number of codewords of weight $w$. As the minimum distance of the code RM$(n, v_n)$ is $2^{n-v_n}$, we deduce that $c_w =0$ for $w\in \{1, \cdots, 2^{n-v_n}-1\}$. For $w \in \{2^{n-v_n}, \cdots, \lceil 2^{n(1-\beta)}\rceil\}$, we have that
\begin{equation}\label{eq:boundcw}
\begin{split}
\log_2 & (c_w)  \stackrel{\mathclap{\mbox{\footnotesize(a)}}}{\le} \log_2 \left(W_{n, v_n}(w2^{-n})\right)\\ &\stackrel{\mathclap{\mbox{\footnotesize(b)}}}{\le} \log_2\left(W_{n, v_n}(2^{\lceil\log_2(w)\rceil-n})\right)\\
&\stackrel{\mathclap{\mbox{\footnotesize(c)}}}{\le} c(v_n+2)^2\Biggl(n(n-\lceil \log_2(w) \rceil -1)\\
&+ \sum_{i=0}^{v_n -n+\lceil \log_2(w) \rceil +1} \binom{\lceil \log_2(w) \rceil +1}{i}\Biggr)\\
&\stackrel{\mathclap{\mbox{\footnotesize(d)}}}{\le} cn^2\left(n^2+ 2^{(\lceil \log_2(w) \rceil +1)\cdot h_2\left(\frac{v_n -n+\lceil \log_2(w) \rceil +1}{\lceil \log_2(w) \rceil +1}\right)}\right)\\
&\stackrel{\mathclap{\mbox{\footnotesize(e)}}}{\le} cn^2\left(n^2+ 2^{( \log_2(w) +2)\cdot h_2\left(\frac{n\beta/4 -n/2+ \log_2(w) +2}{ \log_2(w)}\right)}\right),
\end{split}
\end{equation}
where (a) comes from the definition \eqref{eq:cwd} of cumulative weight distribution, (b) comes from the fact that $W_{n, v_n}(\alpha)$ is increasing in $\alpha$, (c) comes from the application of Lemma \ref{lemma:ubw} with $\ell=n-\lceil \log_2(w) \rceil -1$ and $\varepsilon=1/2$, (d) comes from the application of \cite[Lemma 4.7.2]{Ash-1990} (or, equivalently, of \cite[Eqn.~(1.59)]{RU-2008}), and (e) comes from the fact that $h_2(x)$ is increasing for $x\in [0, 1/2]$ and $v_n$ is upper bounded by \eqref{eq:condrnew}. Note that we fulfill the hypotheses of Lemma \ref{lemma:ubw} since $w \ge 2^{n-v_n}$ implies that $\ell \le v_n-1$. In addition, we can apply \cite[Lemma 4.7.2]{Ash-1990} since \eqref{eq:condrnew} and $w\le \lceil 2^{n(1-\beta)}\rceil $ imply that $v_n -n+\lceil \log_2(w) \rceil +1 \le \frac{\lceil \log_2(w) \rceil +1}{2}$ for $n$ large enough.

Thus, the logarithm of the desired probability is upper bounded as follows:
\begin{equation}\label{eq:logbound}
\begin{split}
&\log_2  \Biggl(\sum_{w = 1}^{\lceil 2^{n(1-\beta)}\rceil} z^{w} c_w\Biggr) \stackrel{\mathclap{\mbox{\footnotesize(a)}}}{=} \log_2 \Biggl(\sum_{w = 2^{n-v_n}}^{\lceil 2^{n(1-\beta)}\rceil} z^{w}c_w \Biggr)\\
&\stackrel{\mathclap{\mbox{\footnotesize(b)}}}{\le} n +\max_{w \in \mathbb N \cap [2^{n-v_n}, \lceil 2^{n(1-\beta)}\rceil]}\log_2 (z^{w} c_w) \\
&\le n + \max_{w \in [2^{n-v_n}, \lceil 2^{n(1-\beta)}\rceil]}\log_2(z^{w} c_w )\\
&\stackrel{\mathclap{\mbox{\footnotesize(c)}}}{\le} n + cn^4 + \max_{\log_2(w) \in [n-v_n, n(1-\beta)+1]} \Biggl(-\log_2(1/z)\cdot 2^{\log_2(w)} \\
&+ cn^2 2^{( \log_2(w) +2)\cdot h_2\left(\frac{n\beta/4 -n/2+ \log_2(w) +2}{ \log_2(w)}\right)}\Biggr)\\
&\stackrel{\mathclap{\mbox{\footnotesize(d)}}}{\le} n + cn^4 + \max_{x \in [1/2-\beta/4, (1-\beta)+1/n]} \Biggl(-\log_2(1/z)\cdot 2^{nx} \\
&+ cn^2 2^{n( x +2/n)\cdot h_2\left(\frac{\beta/4 -1/2+ x +2/n}{ x}\right)}\Biggr)\\
&\stackrel{\mathclap{\mbox{\footnotesize(e)}}}{\le} n + cn^4 + \max_{x \in [1/2-\beta/4, 1-7\beta/8]} \Biggl(-\log_2(1/z)\cdot 2^{nx} \\
&+ 4cn^2 2^{n x \cdot h_2\left(\frac{\beta/3 -1/2+ x}{ x}\right)}\Biggr),
\end{split}
\end{equation}
where (a) uses that $c_w=0$ for $w\in \{1, \cdots, 2^{n-v_n}-1\}$, (b) uses that the number of terms in the sum is upper bounded by $2^n$, (c) uses \eqref{eq:boundcw}, in (d) we set $x = \log_2(w)/n$ and we use the upper bound \eqref{eq:condrnew} on $v_n$, and in (e) we use that $h_2(t)\le 1$ for any $t\in [0, 1]$ and that $1-7\beta/8 \ge 1-\beta+1/n$ and $h_2\left(\frac{\beta/4 -1/2+ x +2/n}{ x}\right) \le h_2\left(\frac{\beta/3 -1/2+ x}{ x}\right)$ for $n$ large enough. 

In order to conclude, it suffices to observe that, for any $\beta \in (0,1/2)$ and any $x\in [1/2-\beta/4, 1-7/8\beta]$, we have
\begin{equation*}
h_2\left(\frac{\beta/3 -1/2+ x}{ x}\right) < 1,
\end{equation*}
which implies that the upper bound in \eqref{eq:logbound} tends to $-\infty$ and, therefore, the desired probability goes to 0. 
\end{proof}

The following two remarks discuss how to tighten the main result by making the hypothesis on the decay rate of $P_{\rm b}$ less restrictive and by evaluating the decay rate of $P_{\rm B}$. 

\begin{remark}[Looser condition on $P_{\rm b}$]\label{rmk:loose}
In order to have that $P_{\rm B}\to 0$, Theorem \ref{th:block} requires that $P_{\rm b}$ is $O(2^{-n\delta})$ for some $\delta>0$. With some more work, one can conclude that $P_{\rm B}\to 0$ even under the less restrictive hypothesis that $P_{\rm b}$ is $O(2^{-n^{1/2+\delta'}})$ for some $\delta'>0$. The proof of this tighter result is based on a stronger version of Lemma \ref{lemma:codeword} which is outlined in the next paragraph. 

Consider the same transmission scenario of Lemma \ref{lemma:codeword} and fix any $\beta' >0$. Then, the probability that the randomized block-MAP decoder outputs an incorrect codeword at Hamming distance at most $2^{n-n^{1/2+\beta'}}$ tends to $0$ as $n$ tends to infinity. In other words, codewords with distances up to $2^{n-n^{1/2+\beta'}}$, for any $\beta' >0$, do not count, as opposed to distances up to $2^{n(1-\beta)}$, for any $\beta>0$, in the original statement. In order to prove this stronger claim, first one needs this tighter bound for the range of $v_n$ (compare to \eqref{eq:condrnew}):
\begin{equation*}
v_n \in \left( \frac{n}{2}-\frac{n^{1/2+\beta'}}{4}, \frac{n}{2}+\frac{n^{1/2+\beta'}}{4}\right). 
\end{equation*}
Indeed, it follows from simple manipulations that $v_n > n/2+n^{1/2+\beta'}/4$ yields rates $R_n\to 1$ and $v_n < n/2-n^{1/2+\beta'}/4$ yields rates $R_n\to 0$. Then, one obtains this bound on $\log_2 (c_w)$  (compare to the last inequality in \eqref{eq:boundcw}):
\begin{equation*}
\log_2 (c_w) \le cn^2\hspace*{-0.2em}\left( \hspace*{-0.2em}n^2 \hspace*{-0.2em}+ \hspace*{-0.2em} 2^{( \log_2(w) +2) h_2\left(\frac{\frac{n^{1/2+\beta'}}{4} -\frac{n}{2}+ \log_2(w) +2}{ \log_2(w)}\right)}\right),
\end{equation*}
which yields the following upper bound on the logarithm of the desired probability (compare to the last inequality in \eqref{eq:logbound}),
\begin{equation}\label{eq:logboundref}
\begin{split}
&n + cn^4 + \max_{x \in [1/2-n^{\beta'-1/2}/4, 1-7n^{\beta'-1/2}/8]} \Biggl(-\log_2(1/z)\cdot 2^{nx} \\
&+ 4cn^2 2^{n x \cdot h_2\left(\frac{n^{\beta'-1/2}/3 -1/2+ x}{ x}\right)}\Biggr).
\end{split}
\end{equation}
Eventually, when $n\to \infty$, one can show that the above quantity tends to $-\infty$, which suffices to prove the claim. Note that the result of Theorem \ref{th:block} cannot be further improved by using a better upper bound on the weight distribution. Indeed, a simple counting argument gives $W_{n, v}(2^{-\ell}) \ge 2^{ n\ell+ \binom{n-\ell}{v-\ell}}$~\cite[Section III]{Abbe-it15}. 
\end{remark}

\begin{remark}[Decay rate of $P_{\rm B}$]
The decay rate of $P_{\rm B}$ is given by the slowest between the decay rates of $P_{\rm B, r}^{\rm l}$ and $P_{\rm B, r}^{\rm h}$, defined in the proof of Theorem \ref{th:block}.

First, assume that $P_{\rm b}$ is $O(2^{-n\delta})$, for some $\delta>0$, as in the hypothesis of the theorem. Note that \eqref{eq:logbound} is minimized when $x=1/2-\beta/4$ and we can pick any $\beta\le \delta/2$, since $\beta$ is set to $\delta/2$ in the proof of Theorem \ref{th:block} and the claim of Lemma \ref{lemma:codeword} is stronger when $\beta$ is smaller. Therefore, one obtains that $P_{\rm B, r}^{\rm l}$ is $O(2^{-2^{n\gamma}})$, for any $\gamma\in (0, 1/2)$. This bound essentially comes from the fact that the minimum distance of RM codes scales as $\sqrt{N}$. From the argument in the last paragraph of the proof of Theorem \ref{th:block}, one obtains that $P_{\rm B, r}^{\rm h}$ is $O(2^{-n\rho})$, for any $\rho\in (0, \delta)$. Thus, when $P_{\rm b}$ is $O(2^{-n\delta})$, we conclude that $P_{\rm B, r}^{\rm h}$ is $O(2^{-n\rho})$ for any fixed $\rho < \delta$. 

Now, assume that $P_{\rm b}$ is $O(2^{-n^{1/2+\delta'}})$, for some $\delta'>0$, as in Remark \ref{rmk:loose}. From \eqref{eq:logboundref}, one obtains again that $P_{\rm B, r}^{\rm l}$ is $O(2^{-2^{n\gamma}})$, for any $\gamma \in(0, 1/2)$. From the argument in the proof of Theorem \ref{th:block}, one obtains that $P_{\rm B, r}^{\rm h}$ is $O(2^{- a \cdot n^{1/2+\delta'}})$, for any $a\in (0, 1)$, which also gives the overall decay rate of $P_{\rm B}$.
In conclusion, these arguments show that the decay rates of $P_{\rm b}$ and $P_{\rm B}$ are essentially the same. 
\end{remark}

\section{Conclusions}\label{sec:concl}

In this paper, we propose a comparison between the bit-MAP and block-MAP decoding thresholds of Reed-Muller codes via the careful analysis of their weight distribution. In particular, we show that, if the bit error probability under bit-MAP decoding tends to $0$ with sufficient speed, then the block error probability under block-MAP decoding also tends to $0$. Specializing this result to the case of the BEC, we obtain an alternative proof of the fact that RM codes achieve capacity under block-MAP decoding, which does not rely on the framework in~\cite{Bourgain-gafa97}. Furthermore, since the main result of the paper applies to the transmission over any BMS channel, it could be seen as a first step towards the generalization of the ideas in \cite{RMpaper-ITTran,RMpaper-STOC} beyond the erasure channel.

\section*{Acknowledgement}

The work of S.~Kumar and H.~D.~Pfister was supported in part by the National Science Foundation (NSF) under Grant No. 1218398. The work of M.~Mondelli and R.~Urbanke was supported by grant No.\ 200020\_146832/1 of the Swiss National Science Foundation. M.~Mondelli was also supported by the Dan David Foundation.

\vspace*{-0.25em}


\bibliographystyle{IEEEtran}
\bibliography{../../bibtex/WCLabrv,../../bibtex/WCLbib,../../bibtex/WCLnewbib}

\begin{thebibliography}{10}
\providecommand{\url}[1]{#1}
\csname url@samestyle\endcsname
\providecommand{\newblock}{\relax}
\providecommand{\bibinfo}[2]{#2}
\providecommand{\BIBentrySTDinterwordspacing}{\spaceskip=0pt\relax}
\providecommand{\BIBentryALTinterwordstretchfactor}{4}
\providecommand{\BIBentryALTinterwordspacing}{\spaceskip=\fontdimen2\font plus
\BIBentryALTinterwordstretchfactor\fontdimen3\font minus
  \fontdimen4\font\relax}
\providecommand{\BIBforeignlanguage}[2]{{%
\expandafter\ifx\csname l@#1\endcsname\relax
\typeout{** WARNING: IEEEtran.bst: No hyphenation pattern has been}%
\typeout{** loaded for the language `#1'. Using the pattern for}%
\typeout{** the default language instead.}%
\else
\language=\csname l@#1\endcsname
\fi
#2}}
\providecommand{\BIBdecl}{\relax}
\BIBdecl

\bibitem{RMpaper-ITTran}
S.~Kudekar, S.~Kumar, M.~Mondelli, H.~D. Pfister, E.~\c{S}a\c{s}o\u{g}lu, and
  R.~L. Urbanke, ``{R}eed-{M}uller codes achieve capacity on erasure
  channels,'' submitted to \textit{IEEE Trans. Inform. Theory}, 2016. [Online].
  Available: http://arxiv.org/pdf/1601.04689.pdf.

\bibitem{RMpaper-STOC}
------, ``{R}eed-{M}uller codes achieve capacity on erasure channels,''
  accepted at \textit{Annual ACM Symp. on Theory of Comp.}, ser. STOC'16.

\bibitem{Muller-ire54}
D.~Muller, ``Application of {B}oolean algebra to switching circuit design and
  to error detection,'' \emph{IRE Tran. on Electronic Computers}, vol. EC-3,
  no.~3, pp. 6--12, Sept 1954.

\bibitem{Reed-ire54}
I.~Reed, ``A class of multiple-error-correcting codes and the decoding
  scheme,'' \emph{IRE Tran. on Information Theory}, vol.~4, no.~4, pp. 38--49,
  September 1954.

\bibitem{Macwilliams-1977}
F.~J. MacWilliams and N.~J.~A. Sloane, \emph{The theory of error correcting
  codes}.\hskip 1em plus 0.5em minus 0.4em\relax Elsevier, 1977, vol.~16.

\bibitem{Carlet-isit05}
C.~Carlet and P.~Gaborit, ``On the construction of balanced boolean functions
  with a good algebraic immunity,'' in \emph{Proc.\ IEEE Int.\ Symp.\ Inform.\
  Theory}, Sept 2005, pp. 1101--1105.

\bibitem{Didier-it06}
F.~Didier, ``A new upper bound on the block error probability after decoding
  over the erasure channel,'' \emph{IEEE Trans.\ Inform.\ Theory}, vol.~52,
  no.~10, pp. 4496--4503, Oct 2006.

\bibitem{Mondelli-com14}
M.~Mondelli, S.~H. Hassani, and R.~L. Urbanke, ``From polar to {R}eed-{M}uller
  codes: A technique to improve the finite-length performance,'' \emph{IEEE
  Trans.\ Commun.}, vol.~62, no.~9, pp. 3084--3091, Sept 2014.

\bibitem{Costello-proc07}
D.~J. Costello, Jr. and G.~D. Forney, Jr., ``Channel coding: The road to
  channel capacity,'' \emph{Proc.\ of the IEEE}, vol.~95, no.~6, pp.
  1150--1177, June 2007.

\bibitem{Abbe-it15}
E.~Abbe, A.~Shpilka, and A.~Wigderson, ``{R}eed-{M}uller codes for random
  erasures and errors,'' \emph{IEEE Trans.\ Inform.\ Theory}, vol.~61, no.~10,
  pp. 5229--5252, Oct 2015.

\bibitem{Abbe-stoc15}
------, ``{R}eed-{M}uller codes for random erasures and errors,'' in
  \emph{Proc.\ of the Annual ACM Symp.\ on Theory of Comp.}, ser. STOC
  '15.\hskip 1em plus 0.5em minus 0.4em\relax New York, NY, USA: ACM, 2015, pp.
  297--306.

\bibitem{Ashikhmin-it04}
A.~Ashikhmin, G.~Kramer, and S.~{ten}~{Brink}, ``Extrinsic information transfer
  functions: model and erasure channel properties,'' \emph{IEEE Trans.\
  Inform.\ Theory}, vol.~50, no.~11, pp. 2657--2674, Nov. 2004.

\bibitem{RU-2008}
T.~J. Richardson and R.~L. Urbanke, \emph{Modern Coding Theory}.\hskip 1em plus
  0.5em minus 0.4em\relax New York, NY: Cambridge University Press, 2008.

\bibitem{Friedgut-procams96}
E.~Friedgut and G.~Kalai, ``Every monotone graph property has a sharp
  threshold,'' \emph{Proc.\ Amer.\ Math.\ Soc.}, vol. 124, no.~10, pp.
  2993--3002, 1996.

\bibitem{Bourgain-gafa97}
J.~Bourgain and G.~Kalai, ``Influences of variables and threshold intervals
  under group symmetries,'' \emph{Geometric \& Functional Analysis}, vol.~7,
  no.~3, pp. 438--461, 1997.

\bibitem{Sloane-it70}
N.~Sloane and E.~Berlekamp, ``Weight enumerator for second-order
  {R}eed-{M}uller codes,'' \emph{IEEE Trans.\ Inform.\ Theory}, vol.~16, no.~6,
  pp. 745--751, Nov 1970.

\bibitem{Kasami-it70}
T.~Kasami and N.~Tokura, ``On the weight structure of {R}eed-{M}uller codes,''
  \emph{IEEE Trans.\ Inform.\ Theory}, vol.~16, no.~6, pp. 752--759, Nov 1970.

\bibitem{Kasami-ic76}
T.~Kasami, N.~Tokura, and S.~Azumi, ``On the weight enumeration of weights less
  than 2.5d of {R}eed-{M}uller codes,'' \emph{Inform.\ and Control}, vol.~30,
  no.~4, pp. 380 -- 395, 1976.

\bibitem{Kaufman-it12}
T.~Kaufman, S.~Lovett, and E.~Porat, ``Weight distribution and list-decoding
  size of {R}eed-{M}uller codes,'' \emph{IEEE Trans.\ Inform.\ Theory},
  vol.~58, no.~5, pp. 2689--2696, May 2012.

\bibitem{MacKay-2003}
D.~J.~C. MacKay, \emph{Information Theory, Inference and Learning
  Algorithms}.\hskip 1em plus 0.5em minus 0.4em\relax New York, NY: Cambridge
  University Press, 2003.

\bibitem{Ash-1990}
R.~B. Ash, \emph{Information theory}.\hskip 1em plus 0.5em minus 0.4em\relax
  Dover Publications, 1990.

\end{thebibliography}

\end{document}